\title{Optimal Construction of Hierarchical Overlap Graphs} 
\titlerunning{Optimal Construction of HOG} 
\author{Shahbaz Khan}{University of Helsinki, Finland}{shahbaz.khan@helsinki.fi}{https://orcid.org/0000-0001-9352-0088}{}
\authorrunning{S. Khan} 
\keywords{Hierarchical Overlap Graphs, String algorithms, Genome assembly} 
\begin{document}

\maketitle

%TODO mandatory: add short abstract of the document
\begin{abstract}
Genome assembly is a fundamental problem in Bioinformatics, where for a given set of overlapping substrings of a genome, the aim is to reconstruct the source genome. The classical approaches to solving this problem use assembly graphs, such as {\em de Bruijn graphs} or {\em overlap graphs}, which maintain partial information about such overlaps. For genome assembly algorithms, these graphs present a trade-off between overlap information stored and scalability. Thus, Hierarchical Overlap Graph (HOG) was proposed to overcome the limitations of both these approaches.

For a given set $P$ of $n$ strings, the first algorithm to compute HOG was given by Cazaux and Rivals~[IPL20] requiring $O(||P||+n^2)$ time using superlinear space, where $||P||$ is the cumulative sum of the lengths of strings in $P$. This was improved by Park et al.~[SPIRE20] to $O(||P||\log n)$ time and $O(||P||)$ space using segment trees, and further to $O(||P||\frac{\log n}{\log \log n})$ for the word RAM model. Both these results described an open problem to compute HOG in optimal $O(||P||)$ time and space. In this paper, we achieve the desired optimal bounds by presenting a simple algorithm that does not use any complex data structures. At its core, our solution improves the classical result [IPL92] for a special case of the All Pairs Suffix Prefix (APSP) problem from $O(||P||+n^2)$ time to optimal $O(||P||)$ time, which may be of independent interest.
\end{abstract}

\section{Introduction}
\label{sec:intro}
Genome assembly is one of the oldest and most fundamental problems in Bioinformatics~\cite{peltola83}. Due to practical limitations, sequencing an entire genome as a single complete string is not possible, rather a collection of the {\em substrings} of the genome (called {\em reads}) are sequenced. The goal of a sequencing technology is to produce a collection of reads that cover the entire genome and have sufficient overlap amongst the reads. This allows the source genome to be reconstructed by ordering the reads using this overlap information. The genome assembly problem thus aims at computing the source genome given such a  collection of overlapping reads. Most approaches of genome assembly capture this overlap information into an {\em assembly graph}, which can then be efficiently processed to assemble the genome. The prominent approaches use assembly graphs such as {\em de Bruijn graphs}~\cite{Pevzner1989} and {\em Overlap graphs} (also called string graphs~\cite{bti1114}), which have been shown to be successfully used in various practical assemblers~\cite{Velvet,Spades,metaSpades,hSpades,PevznerTW01,SimpsonD10}.

The de Bruijn graphs are built over $k$ length substrings (or $k$-mers) of the reads as nodes, and arcs denoting $k-1$ length overlaps among the $k$-mers. 
Their prominent advantage is that their size is linear in that of the input. However, their limitations include losing information about the relationship of $k$-mers with the reads, and in general not being able to represent overlaps of size other than $k-1$ among the reads (except \cite{BoucherBGPS15,BelazzouguiGMPP18,BelazzouguiC19}). On the other hand, Overlap graphs have each read as a node, and edges between every pair of nodes represent their corresponding maximum overlap. In practice, only the edges having certain threshold value of overlap are considered. Though they store more overlap information than de Bruijn graphs, they do not maintain whether two pairs of strings have the same overlap. Moreover, they are inherently quadratic in size in the worst case, and computing the edge weights  (even optimally~\cite{GusfieldLS92,TustumiGTL16,LimP17}) is difficult in practice for large data sets.

As a result, Hierarchical Overlap Graphs (HOG) were proposed in~\cite{CazauxCR16,CazauxR20} as an alternative to overcome such limitations of the two types of assembly graphs. The HOG has nodes for all the longest overlaps between every pair of strings, and edges connecting strings to their suffix and prefix, using linear space. Note that Overlap graphs have edges representing longest overlaps between strings requiring quadratic size, whereas HOG has additional nodes for longest overlaps between strings requiring linear size by exploiting pairs of strings having the same longest overlaps. Thus, it is a promising alternative to both de Bruijn graph and Overlap graph to better solve the problem of genome assembly. Also, since it maintains if two pairs of strings have the same overlap, it also has the potential to better solve the approximate {\em shortest superstring problem}~\cite{Ukkonen90} having applications in both genome assembly and data compression~\cite{Sweedyk99,BlumLTY94}. Some applications of HOG have been studied in~\cite{CazauxCR16,CanovasCR17}.

Cazaux and Rivals~\cite{CazauxR20} presented the first algorithm to build HOG efficiently. They showed how HOG can be computed for a set of $n$ strings $P$ in $O(||P||+n^2)$ time, where  $||P||$ represents the cumulative sum of lengths of strings in $P$. However, they required $O(||P||+n\times\min(n,\max_{p\in P} |p|))$ space, which is superlinear in input size. Park et al.~\cite{ParkCPR20} improved it to $O(||P||\log n)$ time requiring linear space using Segment trees~\cite{BergCKO08},  assuming a constant sized character set. For the word RAM model, they further improved it to  $O(||P||\frac{\log n}{\log \log n})$ time. For practical implementation, both these results build HOG using an intermediate Extended HOG (EHOG) which reduces the memory footprint of the algorithm.  In both the results, the \textit{bottleneck} is solving a special case of All Pairs Suffix Prefix (APSP) problem. Given a set $P$ of $n$ strings, the goal of the APSP problem is to compute the maximum overlaps between every pair of strings. This classical problem was optimally solved by  Gusfield et al.~\cite{GusfieldLS92} using $O(||P||+n^2)$ time and $O(||P||)$ space, where the solution is reported for the $n^2$ pairs. However, for computing HOG we  only require the set of {\em maximum overlaps}, and not their association with the corresponding pairs of strings, making the result suboptimal due to the extra $O(n^2)$ factor. 
Also, both these results~\cite{CazauxR20,ParkCPR20} mentioned as an open problem the construction of HOG using optimal $O(||P||)$ time and space.
We answer this open question positively and solve the special case of APSP optimally as follows.
   
\begin{restatable}[Optimal HOG]{theorem}{optHOG}
\label{thm:HOG}
    For a set of strings $P$, the Hierarchical Overlap Graph can be computed using $O(||P||)$ time and space.
\end{restatable}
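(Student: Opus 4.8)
The plan is to carve the HOG directly out of the Aho--Corasick automaton of $P$, using no search structure at all. First one builds, in $O(||P||)$ time and space, the trie $T$ of $P$ together with its failure links (i.e.\ the Aho--Corasick automaton); over a constant alphabet this is textbook, and over a larger alphabet one radix-sorts the characters first so that children are addressable in $O(1)$. Writing $u_p$ for the node of $T$ corresponding to $p\in P$, one records during this construction, for every node $v$ of $T$, a preorder index with its subtree interval, and the count $c(v)$ of strings $p\in P$ with $u_p$ in the subtree of $v$. Recall next that the vertex set of the EHOG of $P$ consists exactly of the strings that are simultaneously a prefix of some member of $P$ and a suffix of some member of $P$; in the automaton these are precisely the nodes of $T$ reachable from the nodes $u_p$, $p\in P$, by following failure links, i.e.\ the union over $p\in P$ of the root-ward failure path of $u_p$. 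Since each such path has at most $|p|+1$ nodes, this vertex set --- and then the EHOG itself, whose arcs are straightforwardly recovered from the vertex set by one traversal of $T$ and one of the failure tree --- is obtained in $O(||P||)$ time and space. All of this is essentially known; the work lies in extracting the HOG from it.

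An EHOG vertex $v$ is a vertex of the HOG iff $v\in P\cup\{\varepsilon\}$ or $v=\mathrm{ov}(p,p')$ for some $p,p'\in P$, where $\mathrm{ov}(p,p')$ denotes the longest string that is at once a suffix of $p$ and a prefix of $p'$. The classical observation is that $\mathrm{ov}(p,p')$ is the deepest node on the failure path of $u_p$ that is an ancestor of $u_{p'}$ in $T$ --- equivalently, walking upward from $u_{p'}$ in $T$, the first ancestor that is a suffix of $p$. Hence a (nontrivial) EHOG vertex $v$ belongs to the HOG iff there is a string $p$ with $v$ on the failure path of $u_p$ and a node $u_{p'}$ lying in the $T$-subtree of $v$ but in the $T$-subtree of \emph{no} node that precedes $v$ on that path. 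It therefore suffices, for each $p$, to walk the failure path $u_p=v_0,v_1,\dots,v_k=\varepsilon$ from the bottom up and flag $v_i$ precisely when the subtree of $v_i$ still contains some $u_{p'}$ not already covered by $v_0,\dots,v_{i-1}$; a vertex is then a HOG vertex iff it gets flagged by some such walk or lies in $P\cup\{\varepsilon\}$. As $\sum_{p\in P}(|p|+1)=O(||P||)$, the whole task reduces to performing each of these ``incremental coverage'' walks in time linear in its length.

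This is where I expect the difficulty to concentrate, and where the segment trees of the previous algorithms must be replaced by an amortized argument. The structural fact that makes it work is that the $T$-subtrees of the nodes along a single failure path form a \emph{laminar} family: for $i<j$ the node $v_j$ is either disjoint from $v_i$ in $T$ or an ancestor of $v_i$ in $T$, the latter holding exactly when $v_j$ is a border --- simultaneously a prefix and a suffix --- of $v_i$. Consequently, at each step the nodes $u_{p'}$ already covered by $v_0,\dots,v_{i-1}$ form a disjoint union of subtrees of a ``current maximal'' family of nodes, and, using the counts $c(\cdot)$, deciding whether $v_i$ adds a fresh $u_{p'}$ and updating the family amounts to comparing $c(v_i)$ with the sum of the $c$-values of the maximal nodes lying inside the subtree of $v_i$, which are then absorbed under $v_i$. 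Implementing this laminar sweep so that, amortized, each node is inspected and absorbed $O(1)$ times --- the delicate point being to maintain the maximal family, with the help of the preorder intervals, so that the members inside a given subtree can be located and summed without rescanning --- gives $O(|p|)$ time per string, hence $O(||P||)$ for the entire coverage phase. Finally, one more traversal of the EHOG suppresses every unflagged vertex and re-attaches its children to the nearest flagged ancestor, producing the HOG in $O(||P||)$ time and space and establishing the claimed bounds.
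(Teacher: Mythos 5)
Your reduction is sound up to the last step: recognising HOG vertices via ``deepest uncovered ancestor'' along each failure path is essentially the paper's Lemma~2.2, and the laminar-family observation is correct. But the theorem's entire difficulty is concentrated exactly in the step you leave as an assertion, and as written there is a genuine gap there. To run one ``incremental coverage'' walk in $O(|p|)$ time you must, for each node $v_i$ of the failure path, \emph{locate} which members of the current maximal family lie inside the subtree of $v_i$ (equivalently, find where the preorder interval of $v_i$ falls among the disjoint intervals of the family). Absorbing them is cheap amortized, but finding them is a predecessor/interval-stabbing query: done naively it rescans the family (quadratic in the worst case), and done with a search structure it reintroduces precisely the logarithmic overhead that Park et al.\ paid for with segment trees. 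Saying the members ``can be located and summed without rescanning'' with the help of preorder intervals is a restatement of the problem, not a solution; some additional idea (e.g.\ a global bucket sort of all failure-path nodes by preorder followed by a stack-based construction of each path's laminar forest, or an interval union--find) is needed, and none is given. There is also a smaller correctness slip: you let $v_0=u_p$ participate as a coverer, but an overlap must be a \emph{proper} suffix of $p$ and a \emph{proper} prefix of $p'$; if $p$ is a prefix of $p'$, covering the subtree of $u_p$ at step $0$ wrongly suppresses $\mathrm{ov}(p,p')$ (e.g.\ $p=aa$, $p'=aab$, where $\mathrm{ov}(p,p')=a$), so the walk must start at the failure link of $u_p$ and count only leaves strictly below $v_i$.

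For contrast, the paper sidesteps per-string sweeps altogether and needs no geometry on preorder intervals: it performs a single DFS of the trie (over prefixes), maintaining for every suffix-leaf $x$ a stack $S_x$ whose top is the deepest node on the current root-to-node path having $x$ in its list $\mathcal{L}$, plus a list $\mathcal{S}$ of those stacks whose tops are not yet marked. At a prefix-leaf only the stacks in $\mathcal{S}$ are touched, their tops marked and removed from $\mathcal{S}$; since a stack re-enters $\mathcal{S}$ only upon a push or a pop, and the total number of pushes and pops is bounded by $\sum_v|\mathcal{L}_v|=O(\lVert P\rVert)$, the whole marking phase is charged globally to $O(\lVert P\rVert)$. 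That global amortization over stack events is the missing ingredient your per-string laminar sweep would have to replace.
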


Moreover, unlike~\cite{ParkCPR20} our algorithm does not use any complex data structures for its implementation. Also, we do not assume any limitations on the character set. Finally, like~\cite{CazauxR20,ParkCPR20} our algorithm can also use EHOG as an intermediate step for improving memory footprint in practice. Note that the size EHOG and HOG can even be identical for some instances, but their ratio can tend to infinity for some families of graphs~\cite{CazauxR20}. Thus, despite the existence of optimal algorithm for computing EHOG, an optimal algorithm for computing HOG is significant from both theoretical and practical viewpoints.

\subparagraph*{Note.} Another result~\cite{abs-2102-12824} simultaneously achieve the same optimal bound by reducing the problem to computing {\em borders}~\cite{KnuthMP77}. However, our result is simpler and more self-contained. 

\subparagraph*{Outline of the paper.} We first describe notations and preliminaries that are used in our paper in \Cref{sec:prelim}. In \Cref{sec:prevRes}, we briefly describe the previous approaches to compute HOG. Thereafter, \Cref{sec:algo} describes our core result in three stages for simplicity of understanding, each building over the previous, to give the optimal algorithm. Finally, we present the conclusions in \Cref{sec:conc}.

\section{Preliminaries}
\label{sec:prelim}
Given a finite set $P=\{p_1,...,p_n\}$ of $n$ non-empty strings over a finite set of characters, we denote the size of a string $p_i$ by $|p_i|$ and the cumulative size of $P$ by $||P||=\sum_{i=1}^n |p_i|$ ($\geq n$ as strings are not empty). 
For a string $p$, any substring that starts from the first character of $p$ is called a {\em prefix} of $p$, whereas any substring which ends at the last character of $p$ is called a {\em suffix} of $p$. A prefix or suffix of $p$ is called {\em proper} if it is not same as the whole $p$. For an ordered pair of string $(p_1,p_2)$, a string is called their {\em overlap} if it is both a proper suffix of $p_1$ and a proper prefix of $p_2$, where $ov(p_1,p_2)$ denotes the {\em longest} such overlap. Also, for the set of strings $P$, $Ov(P)$ denotes the set of all $ov(p_i,p_j)$ for $1\leq i,j \leq n$. An empty string is denoted by $\epsilon$.
%Note that  $P\subseteq Ov(P)$, as for any string $p$ we have $ov(p,p)=p$ and $ov(p,\epsilon)=\epsilon$, where $\epsilon$ represents an empty string. 
We also use the notions of HOG, EHOG and the Aho-Corasick trie as follows. 

\begin{definition}[Hierarchical Overlap Graph~\cite{CazauxR20}]
Given a set of strings $P=\{p_1,\cdots,p_n\}$, its Hierarchical Overlap Graph is a directed graph ${\cal H}=(V,E)$, where
\begin{itemize}
    \item $V=P\cup Ov(P)\cup\{\epsilon\}$ and $E=E_1\cup E_2$, having
    \item $E_1=\{(x,y):x$ is the longest proper prefix of $y$ in $V\}$  as {\em tree edges}, and 
    \item $E_2=\{(x,y):y$ is the longest proper suffix of $x$ in $V\}$ as {\em suffix links}.
\end{itemize}
\end{definition}

The {\em extended HOG} of $P$ (referred as ${\cal E}$) is also similarly defined~\cite{CazauxR20}, having additional nodes corresponding to every overlap (not just longest) between each pair of strings in $P$, with the same definition of edges. The construction of both these structures uses the Aho-Corasick Trie~\cite{AhoC75} which is computable in $O(||P||)$ time and space. The Aho-Corasick Trie of $P$ (referred as $\cal A$) contains all prefixes of strings in $P$ as nodes, with the same definition for edges. All these structures are essentially {\em trees} having the empty string $\epsilon$ as the root, and the strings of $P$ as its {\em leaves}. A {\em tree edge} $(x,y)$ is labelled with the substring of $y$ not present in $x$.
Hence, despite being a graph due to the presence of {\em suffix links} (also called {\em failure links}),  we abuse the notions used for tree structures when applying to $\cal A,E$ or $\cal H$ (ignoring suffix links). Also, while referring to a node $v$ of $\cal A, E$ or $\cal H$, we represent its corresponding string with $v$ as well.

Consider \Cref{fig:aehog} for a comparison of  ${\cal A}$, $\cal E$ and $\cal H$ for $P=\{aabaa,aadbd,dbdaa\}$. Since $\cal A$ contains all prefixes as nodes, the tree edges have labels of a single character. However, $\cal E$ contains all overlaps  among strings of $P$, so it can potentially have fewer internal nodes ($\{a,aa,db,dbd\}$) than $\cal A$. Further, $\cal H$ contains only longest overlaps so it can potentially have even fewer internal nodes ($\{aa,dbd\}$). 

\begin{figure}
    \centering
    \includegraphics[scale=1.35]{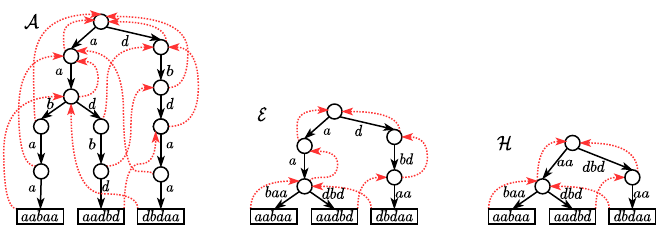}
    \caption{Given $P=\{aabaa,aadbd,dbdaa\}$, the figure shows from left to right the Aho-Corasick Trie ($\cal A$), Extended Hierarchical Overlap Graph ($\cal E$) and Hierarchical Overlap Graph ($\cal H$) of $P$.}
    \label{fig:aehog}
\end{figure}
    
Now, to compute $\cal E$ or $\cal H$ one must only remove some internal nodes from $\cal A$ and adjust the edge labels accordingly. This requires the computation of all overlaps among strings in $P$ for $\cal E$, which is further restricted to only the longest overlaps for $\cal H$. For a string $p_i\in P$ (leaf of $\cal A$), all its prefixes are its ancestors in ${\cal A}$, whereas all its suffixes are on the path following the suffix links from it (referred as {\em suffix path}).
Thus, every internal node is implicitly the prefix of its descendant leaves, and to be an overlap it must merely be a suffix of some string in $P$~\cite{Ukkonen90}. Hence to compute internal nodes of $\cal E$ (or overlap) from $\cal A$ one simply traverses the suffix paths from all the leaves of $\cal A$, and remove the non-traversed internal nodes (see \Cref{fig:aehog}). However, to compute $\cal H$ from $\cal A$ (or $\cal E$) we need to find only  the longest overlaps, which is equivalent to solving a special case of the APSP problem, requiring only the set of all maximum overlaps. We use the following criterion (also used by \cite{GusfieldLS92}) to identify the internal nodes of $\cal H$.

\begin{lemma}[\cite{GusfieldLS92}]
An internal node $v$ in ${\cal A}$ (or $\cal E$) of $P$, is  $ov(p_i,p_j)$ for two strings $p_i,p_j\in P$ iff $v$ is an overlap of $(p_i,p_j)$ and no descendant of $v$ is an overlap of $(p_i,p_j)$.
\label{lem:condition}
\end{lemma}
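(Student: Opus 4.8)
The plan is to prove both directions of the iff by unwinding the definitions. Recall that an internal node $v$ of $\cal A$ (equivalently $\cal E$) represents a string that is a prefix of each of its descendant leaves, and that $v$ is an \emph{overlap} of the ordered pair $(p_i,p_j)$ precisely when $v$ is a proper suffix of $p_i$ and a proper prefix of $p_j$. Since the prefixes of $p_j$ are exactly the ancestors of the leaf $p_j$ in $\cal A$, being a proper prefix of $p_j$ is the same as saying $v$ is a (strict) ancestor of the leaf $p_j$, i.e. $p_j$ lies in the subtree rooted at $v$. So ``$v$ is an overlap of $(p_i,p_j)$'' translates to: $v$ is a proper suffix of $p_i$ and $p_j$ is a descendant leaf of $v$. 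The quantity $ov(p_i,p_j)$ is by definition the \emph{longest} such overlap.

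For the forward direction, suppose $v=ov(p_i,p_j)$. Then $v$ is certainly an overlap of $(p_i,p_j)$. Suppose for contradiction some descendant $w$ of $v$ is also an overlap of $(p_i,p_j)$. A descendant $w$ of $v$ in $\cal A$ corresponds to a string strictly longer than $v$ that has $v$ as a prefix; since $w$ is an overlap of $(p_i,p_j)$ and $|w|>|v|$, this contradicts $v$ being the longest overlap. Hence no descendant of $v$ is an overlap of $(p_i,p_j)$, which is the stated condition.

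For the reverse direction, suppose $v$ is an overlap of $(p_i,p_j)$ and no descendant of $v$ is an overlap of $(p_i,p_j)$. I must show $v=ov(p_i,p_j)$, i.e. $v$ is the longest overlap. Let $u=ov(p_i,p_j)$ be the longest overlap; such a $u$ exists and is an internal node (or possibly $\epsilon$; but since $v$ is a nonempty overlap, $u$ is nonempty, hence a genuine internal node), with $|u|\ge|v|$. Both $u$ and $v$ are proper prefixes of $p_j$, hence both are ancestors of the leaf $p_j$ in $\cal A$; any two ancestors of the same leaf are comparable in the ancestor relation, so either $u$ is an ancestor of $v$ or $v$ is an ancestor of $u$ (or they coincide). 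If $|u|\ge|v|$ and $u\ne v$, then $u$ must be a (strict) descendant of $v$ — but $u$ is an overlap of $(p_i,p_j)$, contradicting the hypothesis that no descendant of $v$ is such an overlap. Therefore $u=v$, so $v=ov(p_i,p_j)$.

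The main obstacle, though it is minor, is being careful about the correspondence between string-length comparisons and the ancestor/descendant relation in $\cal A$: two strings related as prefix/extension correspond to ancestor/descendant nodes only because they are both prefixes of a common leaf string $p_j$, so they lie on a single root-to-leaf path and are therefore comparable; this is what lets us conclude that a longer overlap sharing the prefix $p_j$-relationship with $v$ must sit strictly below $v$. One should also note the edge case where $Ov(P)$ might formally include $\epsilon$; since the statement concerns an \emph{internal} node $v$ (nonempty), the longest overlap $u$ is also nonempty and the argument goes through unchanged. The same proof applies verbatim to $\cal E$, since $\cal E$ is obtained from $\cal A$ by contracting internal nodes that are not overlaps, and every overlap of $(p_i,p_j)$ — in particular $v$, $u$, and any candidate descendant — is by definition a node of $\cal E$, with the ancestor/descendant relation inherited from $\cal A$.
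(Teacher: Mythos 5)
Your proof is correct and follows essentially the same argument as the paper: both rest on the observation that any two overlaps of $(p_i,p_j)$ are prefixes of $p_j$, hence lie on the root-to-leaf path to $p_j$ in ancestor--descendant relation with the descendant being the longer string. You simply spell out both directions (and the existence/nonemptiness of the longest overlap, and the transfer to $\cal E$) explicitly, whereas the paper states the comparability fact and writes out only the forward implication.
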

\begin{proof}
The ancestor of a node $v$ in ${\cal A}$ is its proper prefix and hence is shorter than $v$. Since two internal nodes of $\cal A$ which are both overlaps of $(p_i,p_j)$, are prefixes of $p_j$ and hence have an ancestor-descendant relationship, where the descendant is longer in length. Thus, the longest overlap $ov(p_i,p_j)$ cannot have a descendant which is an overlap of $(p_i,p_j)$.
\end{proof}

Hence to compute $Ov(P)$ (or nodes of $\cal H$), we need to check each internal node $v$ if it is the lowest overlap (in ${\cal A}$) for some pair $(p_i,p_j)$. This implies that $v$ is a suffix of some $p_i$, such that for some descendant leaf $p_j$, no suffix of $p_i$ is on path from $v$ to $p_j$ (see \Cref{fig:aehog}).

\section{Previous results}
\label{sec:prevRes}
Cazaux and Rivals~\cite{CazauxR20} were the first to study $\cal H$, where they used $\cal E$~\cite{CanovasCR17} as an intermediate step in the computation of $\cal H$.
They showed that $\cal E$ can be constructed in $O(||P||)$ time and space from $\cal A$~\cite{AhoC75}, which itself is computable in $O(||P||)$ time and space. In order to compute $\cal H$, the main {\em bottleneck} is the computation of $Ov(P)$ (i.e. solving APSP), after which we simply remove the internal nodes not in $Ov(P)$ from $\cal E$ (or $\cal A$), in $O(||P||)$ time and space. They gave an algorithm to compute $Ov(P)$ in $O(||P||+ n^2)$ time using $O(||P||+n\times\min(n,\max\{|p_i|\}))$ space. This procedure was recently improved by Park et al.~\cite{ParkCPR20} to require  $O(||P||\log n)$ time and $O(||P||)$ space using segment trees, assuming constant sized character set. For the word RAM model they further improve the time to $O(||P||\frac{\log n}{\log \log n})$. The main ideas of the previous results can be summarized as follows.

\paragraph*{Computing $Ov(P)$ in $O(||P||+n^2)$ time~\cite{CazauxR20}}
The algorithm computes $Ov(P)$ by considering the internal nodes in a bottom-up manner, where a node is processed after its descendants. Firstly, for each internal node $u$, they compute the list $R_l(u)$ (called ${\cal L}_u$ in our algorithm) of all leaves having $u$ as a suffix. Now, while processing a node $u$, they check whether $u=ov(v,x)$, i.e., $u$ is a suffix of some leaf $v$ such that the path to at least one of $u$'s descendant leaf (say $x$) does not have a suffix of $v$. 
% $u$ is a suffix of some leaf $v$ for which the path to at least one of its descendant leaf (say $x$) does not have a suffix of $v$, implying $u=ov(v,x)$.
To perform this task, they maintain a bit-vector for all leaves (suffix $v$), which is marked if no such descendant path exists from $u$ for such leaves. For a leaf $v$, the bit is implicitly marked if all children of $u$ have the bit for $v$ marked. Otherwise, if $v\in R_l(u)$ it is marked adding $u$ to $\cal H$, else left unmarked. The space requirement is dominated by that of this bit-vector, and it is computed only for the branching nodes, taking total $O(||P||+n^2)$ time. 

\paragraph*{Computing $Ov(P)$ in $O(||P||\log n)$ time~\cite{ParkCPR20}}
The algorithm firstly orders the strings in $P$ lexicographically in $O(||P||)$ time (requires constant sized character set). This allows them to define an interval of leaves which are the descendants of each internal node in $\cal E$. Now, for each leaf $v$ (suffix)  they start with an unmarked array corresponding to all leaves (prefix). Then starting from $v$ they follow its suffix path and at each internal node $u$, check if some descendant leaf $x$ (prefix) is unmarked. In such a case $u=ov(v,x)$ and hence $u$ is added to $\cal H$. Before moving further in the next suffix path the interval corresponding to all the descendant leaves (prefix) of $u$ is marked in the array. Since both query and update (mark) over an interval can be performed in $O(\log n)$ time using a segment tree, the total time taken is $O(||P||\log n)$ using $O(||P||)$ space.

\section{Our algorithm}
\label{sec:algo}
Our main contribution is an alternative procedure to compute $Ov(P)$ in $O(||P||)$ time and space which results in an optimal algorithm for computing $\cal H$ for $P$ in $O(||P||)$ time and space. 
Our overall approach is similar to that of the original algorithm~\cite{CazauxR20} with the exception of a procedure to mark the internal nodes that belong to $\cal H$, i.e.,  Mark${\cal H}$.
The algorithm except for the procedure Mark${\cal H}$ takes $O(||P||)$ time and space (also shown in~\cite{CazauxR20}). 
We describe our algorithm for Mark${\cal H}$ in three stages, first for a single prefix leaf requiring $O(||P||)$ time, and then for all prefix leaves requiring overall $O(||P||+n^2)$ time, and finally improving it to overall $O(||P||)$ time, which is optimal. The algorithm can be applied to any of ${\cal A}$ or $\cal E$, both computable in $O(||P||)$ time and space.

\subparagraph*{Note:} The second stage of our algorithm is equivalent to   \cite{GusfieldLS92}, and achieves the same bounds as \cite{CazauxR20} for computing $\cal H$, though using a simpler technique and linear space.

\subsection{Outline of Approach}
We first describe our overall approach in \Cref{alg:mainHOG}. After computing $\cal A$, for each internal node $v$, we compute the list ${\cal L}_v$ of all the leaves having $v$ as its suffix. As described earlier, this can be done by following the suffix path of each leaf $x$, adding $x$ to ${\cal L}_y$ for every internal node $y$ on the path. Using this information of suffix (in ${\cal L}_v$) and prefix (implicit in ${\cal A}$) we mark the nodes of $\cal A$ to be added in the HOG $\cal H$. We shall describe this procedure Mark$\cal H$ later on. Thereafter, in order to compute ${\cal H}$ we simply merge the unmarked internal nodes of $\cal A$ with its parents. This process is carried on using a DFS traversal of $\cal A$ (ignoring suffix links) where for each unmarked internal node $v$, we move all its edges to its parent, prepending their labels with the label of the parent edge of $v$.

\begin{algorithm}[tbh]
    	\caption{\textsc{Hierarchical Overlap Graphs}}
    	\label{alg:mainHOG}
    	\DontPrintSemicolon
    	\BlankLine
    	${\cal A} \gets$ Aho-Corasik Trie of $P$
    	\tcp*{Trie with suffix links}
    	\BlankLine
    	\lForEach{internal node $v$ of ${\cal A}$}{
    	${\cal L}_v\gets \emptyset$ 
    	\tcp*[f]{List of leaves with suffix $v$}}
    	
    	\ForEach(\tcp*[f]{Compute all ${\cal L}_v$})
    	{leaf $x$ of ${\cal A}$}{
    	    $y\gets$ Suffix link of $x$ in ${\cal A}$\;
    	    
    	    \While(\tcp*[f]{$\epsilon$ is the root of $\cal A$}){$y\neq \epsilon$}{
    	        Add $x$ to  ${\cal L}_y$\;
    	        $y\gets$ Suffix link of $y$ in ${\cal A}$\;
    	}}
    	\BlankLine

    	$in{\cal H}\gets $Mark${\cal H}({\cal A},{\cal L})$
        \tcp*{Procedure to mark nodes of ${\cal H}$ in flags $in{\cal H}$}

    	\BlankLine
    	\ForEach(\tcp*[f]{Compute ${\cal H}$})
    	{node $v\in {\cal A}$ in DFS order} {
        	\lIf{$in{\cal H}[v]=0$}{
        	   Merge $v$ with its parent
        	  }
    	}

	\end{algorithm}
	
As previously described, $\cal A$ can be computed in $O(||P||)$ time and space~\cite{AhoC75}.     
Computing ${\cal L}_v$ for all $v\in {\cal A}$  requires each leaf $p_i$ to follow its suffix  path in $O(|p_i|)$ time, and add $p_i$ to at most $|p_i|$ different ${\cal L}_y$, requiring total $O(||P||)$ time for all $p_i\in P$. This also limits the size of ${\cal L}_v$ for all $v\in \cal A$ to $O(||P||)$. Since merge operation on a node $v$ requires $O(deg(v))$ cost, computing $\cal H$ using $in{\cal H}$ requires total $O(|\cal A|)=O(||P||)$ time as well. Thus, we have the following theorem (also proved in \cite{CazauxR20}).

\begin{theorem}
For a set of strings $P$, the computation of Hierarchical Overlap Graph except for Mark${\cal H}$ operation requires $O(||P||)$ time and space.
\label{thm:mainHOG}
\end{theorem}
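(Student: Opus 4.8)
The plan is to walk through Algorithm~\ref{alg:mainHOG} line by line, setting aside only the call to Mark$\cal H$, and to charge an $O(||P||)$ time and space bound to each of the remaining ingredients separately: building the Aho--Corasick trie $\cal A$ (with suffix links), initialising and filling all the lists ${\cal L}_v$, and the concluding DFS that merges the unmarked internal nodes into their parents.

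First I would dispose of the trie itself: $\cal A$, together with all its suffix links, is built in $O(||P||)$ time and space by the classical algorithm of Aho and Corasick~\cite{AhoC75}, and since $\cal A$ has exactly one node per distinct prefix occurring among the strings of $P$, we also obtain $|{\cal A}| = O(||P||)$ — a bound I will reuse twice below. Initialising ${\cal L}_v \gets \emptyset$ for every internal node is then $O(|{\cal A}|) = O(||P||)$.

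The crux of the bookkeeping is the loop that computes the lists ${\cal L}_v$. Here the observation is that the suffix link of any node points to a strictly shorter string, so the suffix path starting at a leaf $p_i$ has at most $|p_i|$ nodes; hence its traversal costs $O(|p_i|)$ and inserts $p_i$ into at most $|p_i|$ lists. Summing over the $n$ leaves gives $O\!\left(\sum_i |p_i|\right) = O(||P||)$ time, and the very same count shows $\sum_{v} |{\cal L}_v| \le ||P||$, so the lists occupy $O(||P||)$ space. (This is exactly the amortisation already used in~\cite{CazauxR20}.)

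Finally, the merging DFS visits each node of $\cal A$ once; at an unmarked internal node $v$ it redirects the $deg(v)$ outgoing tree edges to $v$'s current parent, prepending to each the label of $v$'s parent edge, and deletes $v$. The subtle point — and the one I would be careful to spell out — is that edge labels must be stored \emph{implicitly}, as a string identifier together with a pair of positions into some $p_j \in P$: every label along a root-to-node path of $\cal A$ is a contiguous substring of each descendant leaf string, so every prepend is an $O(1)$ index update rather than a copy proportional to the label length. (A naive explicit copy would cost $\Theta(||P||^2)$ on a long shared path feeding many leaves, so this representation is doing real work.) With implicit labels the work at $v$ is $O(deg(v))$, so the DFS runs in $\sum_v O(deg(v)) = O(|{\cal A}|) = O(||P||)$ time and needs no space beyond $\cal A$. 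Summing over the stages gives the stated $O(||P||)$ time and space.
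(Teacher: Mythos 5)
Your proposal is correct and follows essentially the same route as the paper: Aho--Corasick construction of $\cal A$ in $O(||P||)$ time and space, the suffix-path amortization showing each leaf $p_i$ contributes $O(|p_i|)$ to both the time and the total size of the lists ${\cal L}_v$, and the concluding DFS charging $O(deg(v))$ per merged node for a total of $O(|{\cal A}|)=O(||P||)$. Your additional remark on storing edge labels implicitly (as index ranges) so that each prepend is $O(1)$ is a sound implementation detail that the paper leaves implicit when asserting the $O(deg(v))$ merge cost, and it strengthens rather than alters the argument.
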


\subsection{Marking the nodes of ${\cal H}$}
We shall describe our procedure to mark the nodes of ${\cal H}$ in three stages for simplicity of understanding. First, we shall describe how to mark all internal nodes representing all longest overlaps $ov(\cdot,v)$ from a single leaf $v$ (prefix) in $\cal A$, using $O(||P||)$ time. Thereafter, we extend this to compute such overlaps from all leaves in ${\cal A}$ together using $O(||P||+n^2)$ time (equivalent to \cite{GusfieldLS92}). Finally, we shall improve this to our final procedure requiring optimal $O(||P||)$ time. All the three procedures require $O(||P||)$ space.

\begin{figure}
    \centering
    \includegraphics[scale=1.25]{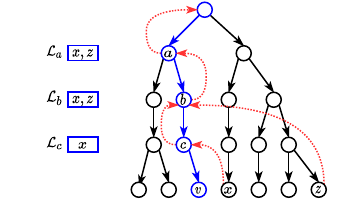}
    \caption{Overlaps of $v$ with all leaves, where $c=ov(x,v)$ and $b=ov(z,v)$ are in $Ov(P)$.}
    \label{fig:hog-stack}
\end{figure}
    
\paragraph*{Marking all nodes $ov(\cdot,v)$ for a leaf $v$}
In order to compute all longest overlaps of a leaf $v$ (see \Cref{fig:hog-stack}), we need to consider all its prefixes (ancestors in $\cal A$) according to \Cref{lem:condition}. Here the internal nodes $a,b$ and $c$ are prefixes of $v$ and also suffixes of $x$, whereas $z$ only has suffixes $a$ and $b$. Thus, we have ${\cal L}_a={\cal L}_b=\{x,z\}$ and ${\cal L}_c=\{x\}$. Thus, given that $a,b$ and $c$ are ancestors of $v$, $a$ and $b$ are valid overlaps of $(x,v)$ and $(z,v)$, whereas $c$ is only a valid overlap of $(x,v)$. Using \Cref{lem:condition}, for being the longest overlap of a pair of strings, no descendant should be an overlap of the same pair of strings. Hence, $c=ov(x,v)$ and $b=ov(z,v)$, but $a$ is not the longest overlap for any pair of strings because of $b$ and $c$. Processing ${\cal L}_u$ for all nodes on the ancestors of the leaf (prefix) requires $O(||P||)$ time. 
Thus, a simple way to mark all the longest overlaps of strings with prefix $v$ in $O(||P||)$ time, is as follows: \\

\noindent
\fbox{\parbox{\textwidth}{
\textbf{Mark$\cal H$ for $ov(\cdot,v)$:} \\
%{\em Traverse the ancestral path of $v$  from the root to $v$, storing the last internal node $y$ containing $x$ in ${\cal L}_y$ for each leaf $x$ of $\cal A$. On reaching $v$, mark the stored internal nodes for each $x$.} 
{\em Traverse the ancestral path of $v$ from the root to $v$, storing for each leaf $x$ of $\cal A$ the last internal node $y$ having $x$ in ${\cal L}_y$. On reaching $v$, mark the stored internal nodes for each $x$.} 
}}

	\begin{algorithm}[tbh]
    	\caption{\textsc{Mark${\cal H}({\cal A},{\cal L})$}}
    	\label{alg:HOG}
    	%\seb{it might look nicer to use $comp[\cdot]$}}
    	\DontPrintSemicolon
    	\BlankLine
\lForEach{internal node $v$ of ${\cal A}$}{	$in{\cal H}[v]\gets 0$ 
    	\tcp*[f]{Flag for membership in ${\cal H}$}
    	}
\lForEach{leaf $v$ of ${\cal A}$}{
    	$in{\cal H}[v]\gets 1$
    	\tcp*[f]{Leaves implicitly in  ${\cal H}$}
    	}
    	$in{\cal H}[\epsilon]\gets 1$
    	\tcp*{Root implicitly in $\cal H$}
    	
    	\BlankLine
\lForEach{leaf $v$ of ${\cal A}$}{
    	${S}_v\gets \emptyset$ 
    	\tcp*[f]{Stack of exposed suffix}}

	\BlankLine
    	\ForEach(\tcp*[f]{Compute all $in{\cal H}[v]$})
    	{node $v\in {\cal A}$ in DFS order} {
        	\If{internal node $v$ first visited}{
        	    \lForEach(\tcp*[f]{Expose $v$ on stacks of ${\cal L}_v$}){leaf $x$ in ${\cal L}_v$}
        	    {Push $v$ on ${S}_x$
        	    }
        	    }
        	\If{internal node $v$ last visited}{
        	   \lForEach(\tcp*[f]{Remove $v$ from stacks of ${\cal L}_v$}){leaf $x$ in ${\cal L}_v$}
        	    {Pop $v$ from $S_x$
        	    }
        	}
        	\If{leaf $v$ visited}{
        	    \ForEach{leaf node $x$}
        	    {
        	    \If{$S_x \neq \emptyset$}{$in{\cal H}[$Top of $S_x]\gets 1$
        	    \tcp*[f]{Mark $ov(x,v)$}
        	    }
        	    }
        	}
    	}
    	\BlankLine
    	Return $in{\cal H}$\;
    	
	\end{algorithm}
	
\paragraph*{Marking all nodes in $Ov(P)$}
We now describe how to perform this procedure for all leaves (prefix) together (see \Cref{alg:HOG}) using stacks to keep track of the last encountered internal node for each leaf (suffix). The main reason behind using stacks is to avoid processing ${\cal L}_u$ multiple times (for different prefixes). For each internal node, we initialize the flag denoting membership in $\cal H$ to zero, whereas the root and leaves of $\cal A$ are implicitly in $\cal H$. For each leaf (suffix) we initialize an empty stack.
Now, we traverse $\cal A$ in DFS order (ignoring suffix links). As in the case for single leaf (prefix), the stack $S_x$ maintains the last internal node $v$ containing a leaf $x$ (suffix) in ${\cal L}_v$. This node  $v$ is added to the stack $S_x$ of the leaf $x$ (suffix) when $v$ is first visited by the traversal, and removed from the stack $S_x$ when it is last visited. This exposes the previously added internal nodes on the stack. Finally, on visiting a leaf $v$ (prefix), each non-empty stack $S_x$ of a leaf $x$ (suffix) exposes the internal node last added on its top, which is the longest overlap $ov(x,v)$ by \Cref{lem:condition}. We mark such internal nodes as being present in $\cal H$. The correctness follows from the same arguments used for the first approach. \\

In order to analyze the procedure we need to consider the processing of ${\cal L}_v$ and $S_x$ for all $v,x\in {\cal A}$, in addition to traversing ${\cal A}$. Since the total size of all ${\cal L}_v$ is $O(||P||)$, processing it twice (on the first and last visit of $v$) requires $O(||P||)$ time. This also includes the time to push and pop nodes from the stacks, requiring $O(1)$ time while processing ${\cal L}_v$. However, on visiting the leaf (prefix) by the traversal, we need to evaluate all $S_x$ and mark the top of non-empty stacks. Since we consider $n$ leaves (prefix), each processing all stacks of $n$ leaves (suffix), we require $O(n^2)$ time. For analyzing size, we need to consider only $S_x$ in addition to ${\cal L}_v$. Since the nodes in all $S_x$ are added once from some ${\cal L}_v$, the total size of all stacks $S_x$ is bounded by the size of all lists ${\cal L}_v$, i.e. $O(||P||)$ (as proved earlier). Thus, this procedure requires $O(||P||+n^2)$ time and $O(||P||)$ space to mark all nodes in $Ov(P)$.  

\newpage

	\begin{algorithm}[H]
    	\caption{\textsc{Mark${\cal H}({\cal A},{\cal L})$}}
    	\label{alg:optHOG}
    	%\seb{it might look nicer to use $comp[\cdot]$}}
    	\DontPrintSemicolon
    	\BlankLine
\lForEach{internal node $v$ of ${\cal A}$}{	$in{\cal H}[v]\gets 0$ 
    	\tcp*[f]{Flag for membership in ${\cal H}$}
    	}
\lForEach{leaf $v$ of ${\cal A}$}{
    	$in{\cal H}[v]\gets 1$
    	\tcp*[f]{Leaves implicitly in  ${\cal H}$}
    	}
$in{\cal H}[root]\gets 1$
\tcp*{Root implicitly in $\cal H$}
    	
    	\BlankLine
    	
\lForEach{leaf $v$ of ${\cal A}$}{
    	${S}_v\gets \emptyset$ 
    	\tcp*[f]{Stack of exposed suffix}}

{\color{blue}
${\cal S}\gets \emptyset$
        	\tcp*{List of stacks with unmarked tops}
\lForEach{leaf $v$ of ${\cal A}$}{
    	$in{\cal S}[v]\gets 0$ 
    	\tcp*[f]{Flag for membership of $S_v$ in $\cal S$}}
}

	\BlankLine
    	\ForEach(\tcp*[f]{Compute all $in{\cal H}[v]$})
    	{node $v\in {\cal A}$ in DFS order} {
        	\If{internal node $v$ first visited}{
        	    \ForEach(\tcp*[f]{Expose $v$ on stacks of ${\cal L}_v$}){leaf $x$ in ${\cal L}_v$}
        	    {Push $v$ on ${S}_x$\;
{\color{blue}
        	    \If(\tcp*[f]{Add $S_x$ to $\cal S$ if not present}){$in{\cal S}[x]=0$}{$in{\cal S}[x]\gets 1$\; Add $S_x$ to ${\cal S}$}
}        	    
        	    }
        	    }
        	\If{internal node $v$ last visited}{
        	    \ForEach(\tcp*[f]{Remove $v$ from stacks of ${\cal L}_v$}){leaf $x$ in ${\cal L}_v$}
        	    {Pop $v$ from $S_x$\;
{\color{blue}
        	    \uIf(\tcp*[f]{$S_x$ eligible in $\cal S$}){$S_x\neq \emptyset$ \textbf{and} $in{\cal H}[$Top of $S_x]=0$}
        	    {
        	     \If(\tcp*[f]{$S_x$ not present in $\cal S$}
        	    ){$in{\cal S}[x]=0$}{
        	    $in{\cal S}[x]\gets 1$\; Add $S_x$ to $\cal S$}}
        	
        	\Else(\tcp*[f]{$S_x$ either empty or with marked top}){
        	     \If(\tcp*[f]{$S_x$ present in $\cal S$}
        	    ){$in{\cal S}[x]=1$}{
        	    $in{\cal S}[x]\gets 0$\; Remove $S_x$ from $\cal S$}
        	}}}
}
        	\If{leaf $v$ visited}{
        	    \ForEach{ \color{red} $S_x\in {\cal S}$}
        	    {$in{\cal H}[$Top of $S_x]\gets 1$
        	    \tcp*{Mark $ov(x,v)$}
{\color{blue}        	    
     	    $in{\cal S}[x]\gets 0$\;
Remove $S_x$ from ${\cal S}$
\tcp*{Remove $S_x$ with marked top from $\cal S$}
}
        	    }
        	    
 %       	    \While{${\cal S}\neq \emptyset$}{
 %              $S_x\gets$ Remove first of $\cal S$ \tcp*{Top of $S_x$ to be marked}
%                $in{\cal S}[x] \gets 0$\;
%                $in{\cal H}[$Top of $S_x] \gets 1$ \tcp*{Mark $ov(x,v)$}
%        	    }
        	}
    	}
    	\BlankLine
    	Return $in{\cal H}$\;
    	
	\end{algorithm}

\paragraph*{Optimizing Mark${\cal H}$}
As described earlier, the only operation not bounded by $O(||P||)$ time is the marking of internal nodes, while processing the leaves (prefix) considering the stacks of all leaves (suffix). Note that this procedure is overkill as the same top of the stack can be marked again when processing different leaves (prefix), whereas total nodes entering and leaving stacks are proportional to total size of all ${\cal L}_u$, i.e., $O(||P||)$. Thus, we ensure that we do not have to process stacks of all leaves (suffix) on processing the leaves (prefix) of ${\cal A}$, and instead, we only process those stacks which were not processed earlier to mark the same top. Note that the same internal node may be marked again when exposed in different stacks, but we ensure that it is not marked again while processing the same stack.

Consider \Cref{alg:optHOG} (showing modified code in {\color{red} red} and additions in {\color{blue} blue}), we maintain a doubly linked-list $\cal S$ of non-empty stacks whose tops are not marked. Now, whenever a new node is added to a stack, it clearly has an unmarked top, so it is added to ${\cal S}$. And when a node is removed from a stack, the stack is added to  ${\cal S}$ if the new top is not previously marked and stack in not already in $\cal S$. Similarly, if the stack is empty or has a previously marked top, it is removed from $\cal S$ if it was present in $\cal S$.  Since ${\cal S}$ is a list, its members are additionally maintained using flags $in{\cal S}$ for each stack corresponding to leaves (suffix) of $\cal A$, so that the same stack is not added multiple times in $\cal S$. Also, each stack in ${\cal S}$ maintains a pointer to its location in $\cal S$, so that it can be efficiently removed if required. Now, on processing the leaves (prefix) of ${\cal A}$, we only process the stacks in $\cal S$, marking their tops and removing them from $\cal S$. Clearly, stacks are added to $\cal S$ only while processing ${\cal L}_v$, hence overall we can mark $O(|{\cal L}_v|)$ nodes for all $v$, requiring total $O(||P||)$ time. And the time taken in removing stacks from $\cal S$ is bounded by the total size of all $S_x$, which is also $O(||P||)$. Thus, we can perform Mark$\cal H$ using optimal $O(||P||)$ time and space, which results in our main result (using \Cref{thm:mainHOG}).

\optHOG*

\subparagraph*{Remark: } The classical result for APSP~\cite{GusfieldLS92} (equivalent to our second stage) was optimized~\cite{Gusfield1997} to get {\em output-sensitive} $O(||P||+n')$ time (where $n'$ is number of pairs with non-zero overlap) by maintaining a list of non-empty stacks (similar to our list $\cal S$ of stacks with non-marked heads). However, their approach does not suffice for computing $\cal H$ optimallty as in the worst case $n'=O(n^2)>> O(||P||)$ .

\section{Conclusions}
\label{sec:conc}
Genome assembly is one of the most prominent problems in Bioinformatics, and it traditionally relies on de Bruijn graphs or Overlap graphs, each having limitations of either loss of information or quadratic space requirements. Hierarchical Overlap Graphs provide a promising alternative that may result in better algorithms for genome assembly. The previous results on computing these graphs were not scalable (due to the quadratic time-bound) or required complicated data structures (segment trees). Moreover, computing HOG in optimal time and space was mentioned as an open problem in both the previous results~\cite{CazauxR20,ParkCPR20}. We present a simple algorithm that achieves the desired bounds, using only elementary data structures such as stacks and lists. At its core, we present an improved algorithm for a special case of All Pairs Suffix Prefix problem. We hope our algorithm directly, or after further simplification, results in a greater adaptability of HOGs in developing better genome assembly algorithms.

\bibliography{main}

\end{document}